\documentclass[11pt,reqno]{amsart}
\usepackage{amsfonts,amssymb,amsmath,amsopn,amsthm,graphicx,dsfont}
\usepackage{amsxtra,mathrsfs}
\usepackage[mathscr]{eucal}
\usepackage{colordvi}
\usepackage[usenames,dvipsnames]{color}
\usepackage{mathtools}
\usepackage[colorlinks=true]{hyperref}

\usepackage{amsmath,amsfonts,amsthm,amssymb,amsxtra}
\usepackage{amsxtra, amssymb, mathrsfs}
\usepackage[usenames,dvipsnames]{color}



\setlength{\voffset}{-.7truein}
\setlength{\textheight}{8.8truein}
\setlength{\textwidth}{6.1truein}
\setlength{\hoffset}{-.7truein}


\newtheorem{theorem}{Theorem}[section]

\newtheorem{lemma}[theorem]{Lemma}

\theoremstyle{definition}

\theoremstyle{remark}
\newtheorem{remark}[theorem]{\bf Remark}

\numberwithin{equation}{section}


\newcommand{\pd}{\partial}
\newcommand{\eps}{\varepsilon}

\newcommand{\R}{\mathbb{R}}

\newcommand{\Z}{\mathbb{Z}}

\newcommand{\hh}{\mathscr{H}}


\begin{document}

\title[Heat kernel estimates for relativistic Hamiltonians with magnetic field]{Heat kernel estimates for two-dimensional
relativistic Hamiltonians with magnetic field}

\author{ Hynek Kova\v r\'{\i}k}

\address {Hynek Kova\v{r}\'{\i}k, DICATAM, Sezione di Matematica, Universit\`a degli studi di Brescia, Italy}

\email {hynek.kovarik@unibs.it}

\maketitle

\begin{center}
\emph{To Ari Laptev on the occasion of his 70th birthday.}
\end{center}


\begin{abstract}
We study semigroups generated by two-dimensional relativistic Hamiltonians with magnetic field. In particular, for compactly supported radial magnetic field we show
how the long time behaviour of the associated heat kernel depends on the flux of the field. Similar questions are addressed for Aharonov-Bohm type magnetic field.
\end{abstract}

\section{\bf Introduction} 
\label{sec-intro}
Consider a two-dimensional magnetic Laplacian formally given by 
\begin{equation} 
H=P^2,  \qquad P  = i\nabla +A
\end{equation}
in $L^2(\R^2)$, where $A: \R^2\to \R^2$ is a vector potential generating a magnetic field $B:\R^2\to \R$ through the relation rot$\, A=B$.  Recall that if $A\in L^2_{\rm loc}(\R^2)$, then $P^2$ is the unique self-adjoint operator associated with the closed quadratic form 
\begin{equation} 
Q[u] = \|(i\nabla +A) u\|_2^2, \qquad u\in D(Q), 
\end{equation}
with the form domain
$$
 D(Q) = \big\{u\in L^2(\R^2)\, : \ P u\in L^2(\R^2)\big\}\, .
$$
The main object of our interest in this paper is the integral kernel of the semigroup generated by the relativistic Hamiltonian 
\begin{equation} \label{rel-op}
\hh= \hh(A,m) = \sqrt{P^2 +m^2} -m, 
\end{equation}
where $m\geq 0$ is the mass of the particle. In particular, we are interested in the long time behaviour of $e^{-t \,\hh(A,m)}(x,y)$ and in its dependence on the magnetic field.  Note that for a massless particle in the absence of magnetic field, we have
\begin{equation}  \label{00}
e^{-t \,\hh(0,0)}(x,y) = e^{-t \sqrt{-\Delta} }(x,y) = \frac{t}{2\pi (t^2 +|x-y|^2)^{3/2} }\, ,  \qquad    x,y \in\R^2, \ \ t>0,
\end{equation}
see e.g.~\cite[Sec.~7.11]{LL}. Hence 
\begin{equation} 
e^{-t \,\hh(0,0)}(x,y)  \, \leq\, \frac{1}{2\pi t^2} \qquad t>0, 
\end{equation}
uniformly in $x$ and $y$. By the diamagnetic inequality this upper bound can be extended to $\hh(A,0)$; 
\begin{equation} \label{upprb-mag} 
\big |\, e^{-t\, \hh(A,0)}(x,y)\, \big |   \, \leq\, \frac{1}{2\pi t^2} \qquad t>0,
\end{equation}
see equation \eqref{diamag-2} below.

However, Laptev and Weidl proved in \cite{lw}, under mild regularity and decay assumptions on $B$, that the operator $H$ satisfies a Hardy-type inequality 
\begin{equation} \label{hardy} 
\int_{\R^2} | (i\nabla +A) u|^2 \ \geq \ \int_{\R^2} w\, |u|^2 \qquad \forall\, u\in D(Q),
\end{equation}
where $w\gneqq 0$ is a weight functions such that $w(x) = \mathcal{O}(|x|^{-2}$) as $|x|\to\infty$. Hence the presence of a magnetic field removes the singularity of the Green of $-\Delta$ at zero energy. 
This suggest that it should be possible to improve the decay rate in $t$ of the upper bound \eqref{upprb-mag}, probably at a cost of spatial weights. One of the results of this paper, Theorem \ref{thm-m0}, confirms this heuristic expectation for radial magnetic fields with compact support. 

In the massive case when $m>0$  the semigroup generated by $\hh(A,m)$ exhibits different behaviour for $t\to 0$ and for $t\to\infty$. However, for large times one  observes again a faster time decay of the associated heat kernel with respect to the heat kernel generated by the non-magnetic operator, see Theorem \ref{thm-mpos}. In Section \ref{sec-ab} we obtain analogous results  for the Aharonov-Bohm type magnetic fields.

\begin{remark}
It should be noted that the path integral methods developed in \cite{hil}  could possibly provide a tool for alternative proofs or even more general
results. 
\end{remark}

\section{\bf Radial magnetic field} 

\subsection{Preliminaries} In this section we will always assume that $B\in L^1(\R^2)$. Let
\begin{equation} 
\alpha = \frac{1}{2\pi}\!\! \int_{\R^2} B 
\end{equation}
be the total (normalized) magnetic flux, and let
\begin{equation} \label{kappa}
\kappa = \min_{m\in\Z} |m+\alpha| \in [0, 1/2]
\end{equation}
be the distance between $\alpha$ and the set of integers. Recall also that for any $A\in L^2_{\rm loc}(\R^2)$ the semigroup $e^{-t H}$ satisfies 
the diamagnetic inequality 
\begin{equation} \label{diamag}
\big |\, e^{-t H}(x,y)\, \big |   \, \leq\, \frac{1}{4\pi t} \, e^{-\frac{|x-y|^2}{4t}} \qquad \text{a.~e.} \ x,y\in\R^2 , \ \ t>0,
\end{equation}
see \cite{si,hsu, ka,ahs,hs}.

\subsection{\bf The case $m=0$}

\begin{theorem} \label{thm-m0}
Assume that $B$ is radial, continuous and compactly supported.  Then 
\begin{equation} \label{lpq}
e^{-t\, \hh }: \, L^p(\R^2)\to L^q(\R^2),  \qquad  1\leq p\leq q\leq \infty,
\end{equation}
and its kernel  satisfies
\begin{align} 
|\, e^{-t\,  \hh(A,0)}(x,y) | \, & \lesssim \, (1+|x|)^{\beta} (1+|y|)^{\beta}\ t^{-2-2\beta} , \qquad\qquad\qquad\qquad\quad \text{if} \  \kappa >0,   \label{upperb-non-int}\\
|\, e^{-t\,  \hh(A,0)}(x,y) | \, & \lesssim\, \log (2+|x|)^\theta  \log (2+|y|)^\theta \ t^{-2} \big[\log(2+t)]^{-2\theta}\qquad  \text{if} \ \kappa = 0,   \label{upperb-int}
\end{align}
for any $\beta\in[0,\kappa]$ and any $\theta\in[0,1]$ respectively. 
\end{theorem}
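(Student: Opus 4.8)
The plan is to relate the relativistic semigroup $e^{-t\hh}$ to the parabolic semigroup $e^{-sH}$ via the subordination formula
\begin{equation}
e^{-t\sqrt{H}} = \frac{t}{2\sqrt{\pi}} \int_0^\infty s^{-3/2}\, e^{-t^2/(4s)}\, e^{-sH}\, ds,
\end{equation}
so that the needed kernel bounds follow from pointwise heat kernel bounds for $H=P^2$ integrated against the subordinator. The heart of the matter is therefore to establish the correct long-time pointwise decay of $e^{-sH}(x,y)$, which encodes the Hardy inequality \eqref{hardy}: the flux-dependent spectral gap at the bottom of the spectrum should upgrade the diamagnetic bound \eqref{diamag} to a decay like $s^{-1-2\beta}$ (for $\kappa>0$) or $s^{-1}[\log s]^{-2\theta}$ (for $\kappa=0$), at the cost of the spatial weights $(1+|x|)^{\beta}(1+|y|)^{\beta}$, respectively $\log(2+|x|)^\theta\log(2+|y|)^\theta$.

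First I would exploit the hypotheses that $B$ is radial, continuous and compactly supported. Radiality lets me decompose $L^2(\R^2)$ into angular sectors: passing to polar coordinates and expanding in Fourier modes $e^{in\vartheta}$ reduces $H$ to a direct sum of one-dimensional Schr\"odinger-type operators on the half-line in the variable $r$, each with an effective potential whose leading term is $(n+\alpha)^2/r^2$ once $r$ exceeds the support of $B$ (since outside the support the vector potential is that of a pure flux $\alpha$). The modes with $(n+\alpha)^2$ bounded below by $\kappa^2>0$ give a repulsive inverse-square potential, and it is precisely this that yields the improved decay; when $\kappa=0$ there is a critical mode with a vanishing effective potential, which is the source of the weaker logarithmic improvement.

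Next I would establish the weighted heat-kernel bounds for $H$ mode by mode. For the one-dimensional radial operators with inverse-square potential $c/r^2$, the associated heat kernels are explicitly controlled in terms of Bessel functions, and their long-time behaviour is governed by the exponent determined by $c$; summing over $n$ and reassembling gives a bound of the form $e^{-sH}(x,y)\lesssim (1+|x|)^{\beta}(1+|y|)^{\beta}\, s^{-1-2\beta}$ for $s\geq 1$ (and the diamagnetic short-time bound \eqref{diamag} for $s\leq 1$), and analogously with logarithmic factors in the critical case. I would then insert these two-regime bounds into the subordination integral: splitting at $s\sim t^2$ and estimating each piece, the short-time part contributes the Gaussian-type behaviour controlled by the diamagnetic inequality, while the long-time part produces the claimed powers $t^{-2-2\beta}$ and $t^{-2}[\log(2+t)]^{-2\theta}$ after a change of variables. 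The $L^p\to L^q$ mapping property \eqref{lpq} follows by interpolation once the $L^1\to L^\infty$ bound (the pointwise kernel estimate) and the contractivity on $L^2$ are in hand.

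The main obstacle I expect is obtaining the sharp weighted heat-kernel estimates for $H$ uniformly in the angular index $n$, in particular near the threshold cases $\beta=\kappa$ and $\theta=1$, and handling the delicate matching of the inside-versus-outside-support regions so that the weights $(1+|x|)^\beta$ emerge with the correct exponent rather than a lossy one. The critical mode $\kappa=0$ is the most technically demanding: here the effective operator has no inverse-square barrier and the improvement comes only logarithmically, so one must track the slowly-varying corrections carefully, presumably via a weighted spectral or Hardy-type argument for the one-dimensional half-line operator that captures the $[\log(2+t)]^{-2\theta}$ gain. I would expect the bulk of the technical work to reside in these one-dimensional kernel estimates and their uniform summation over $n$.
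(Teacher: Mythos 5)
Your overall architecture coincides with the paper's: subordination of $e^{-t\sqrt{H}}$ through $e^{-sH}$, the diamagnetic inequality \eqref{diamag} for the unweighted $t^{-2}$ bound, and weighted long-time heat kernel estimates for $H$ as the key input. The difference is that the paper does not prove those weighted heat kernel bounds at all: it quotes them from \cite{k11} (Theorems 3.1--3.3 there), namely $|e^{-sH}(x,y)|\lesssim (1+|x|)^{\kappa}(1+|y|)^{\kappa}\,s^{-1-\kappa}$ for $\kappa>0$ and the logarithmic analogue for $\kappa=0$, and then the whole proof reduces to a short computation with the subordination integral (plus \cite{bhl} for existence and joint continuity of the kernel, a point you do not address). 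What you propose --- partial-wave decomposition, effective inverse-square potentials $(n+\alpha)^2/r^2$ outside $\supp B$, Bessel-type kernel bounds uniform in $n$ --- is essentially a plan to reprove the main results of \cite{k11}; that is where all the real work lies, and your sketch correctly identifies the hard points (uniformity in $n$, the matching across $\supp B$, the critical mode) without carrying them out.

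There is also a quantitative inconsistency you should fix. You assert the intermediate bound $e^{-sH}(x,y)\lesssim (1+|x|)^{\beta}(1+|y|)^{\beta}\,s^{-1-2\beta}$ and claim it yields $t^{-2-2\beta}$ after subordination. But if $e^{-sH}\lesssim s^{-\gamma}$, then
\begin{equation*}
t\int_0^\infty s^{-\frac32-\gamma}\,e^{-\frac{t^2}{4s}}\,ds \;=\; C_\gamma\, t^{-2\gamma},
\end{equation*}
so $\gamma=1+2\beta$ would give $t^{-2-4\beta}$, not $t^{-2-2\beta}$. The correct input --- and the one that is actually true, since the critical angular mode behaves like a Bessel operator of order $\nu=\kappa$ whose diagonal heat kernel decays like $r^{2\kappa}s^{-1-\kappa}$ --- is $s^{-1-\beta}$ with weight exponent $\beta$, exactly as in \eqref{eq-kov-1}. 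Your claimed $s^{-1-2\beta}$ is strictly stronger than what holds and would not be provable by your mode-by-mode analysis; with the exponent corrected, the subordination step delivers \eqref{upperb-non-int} as stated.
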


\begin{proof} 
We use the Poincar\' e gauge for the vector potential $A$;  
\begin{equation} 
A(x) = \frac{(-x_2, x_1)}{|x|^2}\, \int_0^{|x|} B(r)\,  r\,  dr, \qquad x\in\R^2.
\end{equation} 
Then rot$\, A=B$ and $A\in C^1(\R^2)\cap L^\infty(\R^2)$. In view of \cite[Thm.~6.1]{bhl} the semigroup
$$
e^{-t H }: \, L^p(\R^2)\to L^q(\R^2),  \qquad  1\leq p\leq q\leq \infty,
$$
is thus strongly continuous in $t>0$, and its integral kernel  $e^{-t H }(x,y)$ is jointly continuous in $(x,y,t)$.
On the other hand, inequality \eqref{diamag} shows that $ e^{-t H }$ is a contraction on $L^p(\R^2)$ for any $p\in[1,\infty]$.  Hence by \cite[Thms.~3.9.7, 4.3.1]{jac}  and \cite[Example 3.9.16]{jac} we have 
\begin{equation} \label{eq-jacob}
e^{-t \, \hh(A,0)} = e^{-t \, \sqrt{H}}  = \frac{t}{\sqrt{4\pi}} \int_0^\infty s^{-\frac 32}\, e^{-\frac{t^2}{4s}}\, e^{-s H}\, ds,
\end{equation}
with $e^{-t \, \hh(A,0)}$ being a strongly continuous contraction semigroup on $L^p(\R^2), \ p\in[1,\infty]$. Moreover, the above mentioned properties of  $e^{-t H }(x,y)$ in combination with \eqref{eq-jacob} imply that $e^{-t \, \hh(A,0)}$ admits an integral kernel $e^{-t\, \hh(A,0)}(x,y)$ which is jointly continuous in $(x,y,t)$. 
 Note also that
$$
\int_0^\infty s^{-\frac 52}\, e^{-\frac{t^2}{4s}}\,  ds = t^{-3}\!\!\int_0^\infty s^{-\frac 52}\, e^{-\frac{1}{4s}}\,  ds = 8 t^{-3}\!\int_0^\infty\!\sqrt{r} \ e^{-r}\,  dr = 8 t^{-3}\, \Gamma(3/2) = 4 t^{-3}\, \sqrt{\pi}\ .
$$
This in combination with \eqref{diamag} and \eqref{eq-jacob} yields
\begin{equation} \label{diamag-2}
\big |\, e^{-t\, \hh(A,0)}(x,y)\, \big |\, \leq \,  \frac{t}{(4\pi)^{3/2}} \int_0^\infty s^{-\frac 52}\, e^{-\frac{t^2}{4s}}\,  ds =  \frac{1}{2\pi  t^2}\, .
\end{equation}
Hence $ e^{-t \, \hh(A,0)}: L^1(\R^2)\to L^\infty(\R^2)$, and \eqref{lpq} follows. 

\smallskip

\noindent Let us now assume that $\kappa>0$. Then, by \cite[Thm.3.3.]{k11} 
\begin{equation} \label{eq-kov-1}
\big |\, e^{-t\, H}(x,y)\, \big |\, \leq \, C_0\, (1+|x|)^{\kappa}\, (1+|y|)^{\kappa}\ t^{-1-\kappa} \qquad x,y\in\R^2\ \ \ t>0
\end{equation}
holds for some $C_0$. Hence by \eqref{eq-jacob}   
\begin{align*}
|\, e^{-t\, \hh(A,0)}(x,y) | &\,  \lesssim\,  (1+|x|)^{\kappa}\, (1+|y|)^{\kappa} \ t \int_0^\infty s^{-\frac 52-\kappa}\, e^{-\frac{t^2}{4s}}\, \, ds \, \lesssim \, (1+|x|)^{\kappa}\, (1+|y|)^{\kappa}\ t^{-2-2\kappa} .
\end{align*}
In view of \eqref{diamag-2} this proves \eqref{upperb-non-int}.  In order to prove \eqref{upperb-int} we note that for $\kappa=0$, 
\begin{equation} \label{eq-kov-2}
e^{-t H }(x,x)\, \lesssim\,  \log (2+|x|) \log(2+|y|)\ t^{-1}  \big(\log(2+t))^{-2} \, ,
\end{equation}
see  \cite[Thms.~3.1, 3.2]{k11}. Thus, proceeding as above we get 
\begin{align}
&\qquad |\, e^{-t\, \hh(A,0)}(x,y) | \,  \lesssim\,    \log (2+|x|)  \log (2+|y|)  \ t \int_0^\infty s^{-\frac 52}\, \big(\log(2+s))^{-2} \, e^{-\frac{t^2}{4s}}\, \, ds\nonumber  \\
& \qquad\qquad =   \log (2+|x|)   \log (2+|y|)  \ t^{-2} \int_0^\infty r^{-\frac 32}\, \big(\log(2+r t^2))^{-2} \, e^{-\frac{1}{4r}}\, \, dr\nonumber \\
& \qquad\qquad \leq  \log (2+|x|)   \log (2+|y|)  \Big( t^{-2} \int_0^{t^{-1}} r^{-\frac 32}\, e^{-\frac{1}{4r}}\, dr + (\log(2+t))^{-2} \int_{t^{-1}}^\infty r^{-\frac 32}\, e^{-\frac{1}{4r}}\, dr \Big)\nonumber  \\
&\qquad \qquad  \lesssim  \log (2+|x|)  \log (2+|y|) \ t^{-2} \, \big[\log(2+t) \big]^{-2}\,   \label{int-kappa}
\end{align} 
where we have used the fact that
$$
(1+t^n)\,  \int_0^{t^{-1}} r^{-\frac 32}\, e^{-\frac{1}{4r}}\, dr = \mathcal{O}(1) \qquad \forall\, n\geq 1.
$$
Inequality 	\eqref{upperb-int}  thus follows from \eqref{diamag-2} and \eqref{int-kappa}.
\end{proof}

\begin{remark}
The faster time decay of $e^{-t\, \hh(A,0)}(x,y)$ with respect to $e^{-t\, \hh(0,0)}(x,y)$ is compensated by the spacial grow of $x$ and $y$, 
as expected. 
 \end{remark}

\medskip
\subsection{\bf The case $m>0$} For particles with positive mass we have

\begin{theorem} \label{thm-mpos}
Let $m>0$.  Under the assumptions of Theorem \ref{thm-m0} we have 
\begin{align} 
|\, e^{-t\,  \hh(A,m)}(x,y) | \, & \lesssim \, (1+|x|)^{\beta} (1+|y|)^{\beta}\ t^{-1-\beta} , \qquad\qquad\qquad\qquad\quad\ \text{if} \  \kappa >0,   \label{upperb-non-int-2}\\
|\, e^{-t\,  \hh(A,m)}(x,y) | \, & \lesssim\, \log (2+|x|)^\theta  \log (2+|y|)^\theta \ t^{-1} \big[\log(2+t)]^{-2\theta}\qquad  \text{if} \ \kappa = 0,   \label{upperb-int-2}
\end{align}
for $t\geq 1$, and any $\beta\in[0,\kappa]$ and any $\theta\in[0,1]$ respectively. 

\smallskip

\noindent Moreover, if $t\leq 1$, then 
\begin{align} 
|\, e^{-t\,  \hh(A,m)}(x,y) | \, & \lesssim \,  t^{-2}    \label{uppb-tsmall}
\end{align}
\end{theorem}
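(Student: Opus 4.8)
The plan is to mimic the massless argument of Theorem \ref{thm-m0}, applying subordination to $H+m^2$ rather than to $H$. Since $\hh(A,m)=\sqrt{H+m^2}-m$ and $e^{-sH}$ is a strongly continuous contraction semigroup on every $L^p(\R^2)$, the functional calculus of \cite{jac} used for \eqref{eq-jacob} gives
\[
e^{-t\,\hh(A,m)} = e^{tm}\, e^{-t\sqrt{H+m^2}} = \frac{e^{tm}\,t}{\sqrt{4\pi}} \int_0^\infty s^{-\frac 32}\, e^{-\frac{t^2}{4s}}\, e^{-s m^2}\, e^{-sH}\, ds ,
\]
and the joint continuity of the kernel follows from that of $e^{-sH}(x,y)$ via \cite[Thm.~6.1]{bhl}, exactly as before. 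The small-time bound \eqref{uppb-tsmall} is then immediate: inserting the diamagnetic estimate $|e^{-sH}(x,y)|\le (4\pi s)^{-1}$, discarding $e^{-sm^2}\le 1$, and using $\int_0^\infty s^{-5/2}e^{-t^2/(4s)}ds = 4\sqrt{\pi}\,t^{-3}$ from the proof of Theorem \ref{thm-m0} yields $|e^{-t\hh(A,m)}(x,y)|\le e^{tm}(2\pi t^2)^{-1}$, which is $\lesssim t^{-2}$ for $t\le 1$ since $e^{tm}\le e^m$.

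For $t\ge 1$ and $\kappa>0$ I would substitute \eqref{eq-kov-1} into the subordination formula, reducing the $s$-integral to a modified Bessel function,
\[
\int_0^\infty s^{-\frac 52-\kappa}\, e^{-\frac{t^2}{4s}-m^2 s}\, ds = 2\Big(\frac{t}{2m}\Big)^{-\frac 32-\kappa} K_{\frac 32+\kappa}(tm).
\]
The decisive point is that the exponent $\tfrac{t^2}{4s}+m^2 s$ has its minimum at $s_\ast=\tfrac{t}{2m}$ with value exactly $tm$; this matches the large-argument asymptotics $K_\nu(z)\sim\sqrt{\pi/(2z)}\,e^{-z}$, so that $e^{tm}K_{3/2+\kappa}(tm)=O(t^{-1/2})$ for $t\ge 1$ and the exponential prefactor is absorbed. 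Counting the remaining powers of $t$ produces $(1+|x|)^\kappa(1+|y|)^\kappa\,t^{-1-\kappa}$, which is \eqref{upperb-non-int-2} at the endpoint $\beta=\kappa$. The opposite endpoint $\beta=0$ comes from the diamagnetic bound $|e^{-t\hh(A,m)}(x,y)|\le e^{-t\hh(0,m)}(x,y)$, whose diagonal is again a $K_{3/2}$-integral behaving like $t^{-1}$ for $t\ge 1$. Every intermediate $\beta\in[0,\kappa]$, written $\beta=\theta\kappa$, then follows from the elementary interpolation $|K|=|K|^{1-\theta}|K|^\theta$ applied to these two simultaneous upper bounds.

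For $t\ge 1$ and $\kappa=0$ I would insert \eqref{eq-kov-2}, leaving
\[
\int_0^\infty s^{-\frac 52}\,\big(\log(2+s)\big)^{-2}\, e^{-\frac{t^2}{4s}-m^2 s}\, ds .
\]
Here the Bessel cancellation is the same, but the slowly varying factor $(\log(2+s))^{-2}$ must be handled by localizing to a fixed neighbourhood of the saddle, say $s\in[\tfrac{t}{4m},\tfrac{t}{m}]$, where $\log(2+s)\asymp\log(2+t)$; on this window one pulls out $(\log(2+t))^{-2}$ and bounds the remaining integral by the full $K_{3/2}$-integral, while on the complementary set the exponent is at least $\tfrac{5tm}{4}$, so those tails are suppressed by $e^{-tm/4}$ relative to the peak and are negligible. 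This gives \eqref{upperb-int-2} at $\theta=1$, and interpolation against the diamagnetic bound ($\theta=0$) supplies all $\theta\in[0,1]$. I expect this last step — quantifying that the logarithm is essentially constant across the region carrying the mass of the integral, uniformly in $t\ge 1$ — to be the only genuinely delicate part; the massive factor $e^{-m^2 s}$ actually simplifies matters relative to Theorem \ref{thm-m0}, since it fixes an interior saddle at $s_\ast\asymp t$ instead of forcing the split at $s\asymp t^2$ used there.
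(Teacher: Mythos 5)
Your proposal is correct, and it shares the paper's overall skeleton: subordinate to $e^{-sH}$, insert the known magnetic heat-kernel bounds \eqref{diamag}, \eqref{eq-kov-1}, \eqref{eq-kov-2}, show that the prefactor $e^{tm}$ is exactly cancelled by the minimum value $tm$ of $\tfrac{t^2}{4s}+m^2s$ at the saddle $s_*=\tfrac{t}{2m}$ with a Gaussian gain of $t^{-1/2}$, and obtain the intermediate exponents $\beta\in[0,\kappa]$, $\theta\in[0,1]$ by interpolating the weighted bound against the unweighted one. Where you genuinely diverge is in the key technical lemma. The paper completes the square, $e^{mt}e^{-\frac{t^2}{4s}-m^2s}=e^{-(\frac{t}{2\sqrt s}-m\sqrt s)^2}$, substitutes $r=\tfrac{\sqrt t}{2\sqrt s}$, and proves the elementary Lemma \ref{lem-aux1} by a further change of variables; that single lemma then yields the small-time bound \eqref{uppb-tsmall}, the $\beta=0$ endpoint, and the $\beta=\kappa$ endpoint all at once. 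You instead identify the subordination integral as $2(t/2m)^{-3/2-\kappa}K_{3/2+\kappa}(tm)$ and invoke the uniform large-argument bound $K_\nu(z)\lesssim z^{-1/2}e^{-z}$ for $z\ge m>0$ -- shorter, but it imports Bessel asymptotics rather than being self-contained, and your separate disposal of \eqref{uppb-tsmall} via $e^{tm}\le e^m$ for $t\le 1$ is in fact cheaper than the paper's. For $\kappa=0$ the treatments of the logarithm also differ: the paper splits the substituted $r$-integral at $r=t^{1/4}$, where the monotonicity of $s=t/(4r^2)$ makes the logarithm $\gtrsim\log(2+t)$ on the entire region carrying the mass, while you localize to the fixed window $s\in[\tfrac{t}{4m},\tfrac{t}{m}]$ and discard the tails; your tail estimate does close (off the window the exponent exceeds $\tfrac{5tm}{4}$, and halving the exponent and re-evaluating the Bessel integral leaves a surplus factor $e^{-ctm}$ after the $e^{tm}$ cancellation), so this variant is sound, if slightly more delicate to write out than the paper's split.
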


\medskip

\noindent The proof of Theorem \ref{thm-mpos} is based on the following technical result.

\begin{lemma} \label{lem-aux1}
For any $a>0$ there exist constants $C_1(a)$ and $C_2(a)$ such that 
\begin{equation} \label{eq-aux1}
\int_0^\infty r^a\, e^{-t\left(r-\frac{m}{2r}\right)^2}\, dr \, \leq \, C_1(a)\, m^{\frac a2}\, t^{-\frac 12} + C_2(a)\  t^{-\frac{1+a}{2}}
\end{equation} 
holds for all $t>0$.
\end{lemma}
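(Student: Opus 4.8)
The plan is to replace the awkward weight $e^{-t(r - m/(2r))^2}$ by a genuine Gaussian centred at the point where the inner function vanishes, and then evaluate the resulting elementary integral. Set $g(r) = r - \frac{m}{2r}$ for $r>0$. Since $g'(r) = 1 + \frac{m}{2r^2} \ge 1$ and $g$ vanishes precisely at $r_0 := \sqrt{m/2}$, the mean value theorem gives $|g(r)| = |g(r) - g(r_0)| \ge |r - r_0|$ for every $r>0$. Squaring yields the pointwise lower bound
\begin{equation} \label{plan-ptwise}
\left(r - \frac{m}{2r}\right)^2 \ge (r - r_0)^2, \qquad r_0 = \sqrt{m/2},
\end{equation}
and hence, since the map $u\mapsto e^{-tu}$ is decreasing,
\begin{equation} \label{plan-reduce}
\int_0^\infty r^a\, e^{-t(r - \frac{m}{2r})^2}\, dr \ \le\ \int_0^\infty r^a\, e^{-t(r-r_0)^2}\, dr .
\end{equation}

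Next I would rescale. With $s = \sqrt{t}\,(r - r_0)$, so that $r = r_0 + s/\sqrt{t}$ and $dr = t^{-1/2}\, ds$, the right-hand side of \eqref{plan-reduce} becomes $t^{-1/2} \int_{-\sqrt{t}\, r_0}^\infty (r_0 + s/\sqrt{t})^a\, e^{-s^2}\, ds$. Since $r \ge 0$ throughout the range of integration, we have $0 \le r \le r_0 + |s|/\sqrt{t}$, and the elementary inequality $(x+y)^a \le C_a (x^a + y^a)$ for $x,y\ge0$ (with $C_a = \max\{1, 2^{a-1}\}$) lets me split
\begin{equation} \label{plan-split}
\big(r_0 + s/\sqrt{t}\big)^a \ \le\ C_a\big(r_0^a + |s|^a\, t^{-a/2}\big).
\end{equation}
Extending the $s$-integral to all of $\R$ and using the two Gaussian moments $\int_{\R} e^{-s^2}\, ds = \sqrt{\pi}$ and $\int_{\R} |s|^a e^{-s^2}\, ds = \Gamma\!\big(\tfrac{a+1}{2}\big)$ produces exactly the two stated terms, with $r_0^a = 2^{-a/2} m^{a/2}$ absorbed into $C_1(a)$ and the Gamma factor into $C_2(a)$.

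This argument is essentially routine once the right comparison is in place; the only genuine step is the pointwise estimate \eqref{plan-ptwise}, which converts a weight that is singular at the origin into a clean Gaussian and, in particular, correctly locates the bulk of the integrand near $r_0 \sim \sqrt{m}$ (responsible for the $m^{a/2}\, t^{-1/2}$ term), while the Gaussian tail produces the $t^{-(1+a)/2}$ term. No delicate analysis near $r=0$ is required, since the comparison Gaussian is perfectly integrable there, and the resulting bound holds uniformly for all $t>0$.
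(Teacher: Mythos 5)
Your argument is correct, and it gets to the same two-term bound by a slightly different mechanism than the paper. The paper makes the exact inverse substitution $r=\tfrac12\bigl(s+\sqrt{s^2+2m}\bigr)$, under which $r-\tfrac{m}{2r}=s$ identically; it then splits the integral at $s=0$ (i.e.\ at $r=\sqrt{m/2}$) and bounds the factor $\bigl(s+\sqrt{s^2+2m}\bigr)^a$ by $(2m)^{a/2}$ on $s\le 0$ and by a constant times $s^a+m^{a/2}$ on $s\ge 0$, which yields the terms $m^{a/2}t^{-1/2}$ and $t^{-(1+a)/2}$. You instead keep the variable $r$ and replace the exponent by the pointwise lower bound $\bigl(r-\tfrac{m}{2r}\bigr)^2\ge (r-r_0)^2$ with $r_0=\sqrt{m/2}$, which is justified exactly as you say: $g(r)=r-\tfrac{m}{2r}$ satisfies $g(r_0)=0$ and $g'\ge 1$ on $(0,\infty)$, so the mean value theorem gives $|g(r)|\ge|r-r_0|$. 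After that the integral is a genuine Gaussian one and the splitting $(r_0+|s|/\sqrt t)^a\le C_a\bigl(r_0^a+|s|^a t^{-a/2}\bigr)$ with the moments $\sqrt{\pi}$ and $\Gamma\bigl(\tfrac{a+1}{2}\bigr)$ delivers the claim, uniformly in $t>0$. Both proofs rest on the same structural fact, namely that $r\mapsto r-\tfrac{m}{2r}$ is an expanding bijection of $(0,\infty)$ onto $\R$ vanishing at $\sqrt{m/2}$; the paper exploits it through the explicit change of variables and its Jacobian, while you exploit it through a one-line comparison that avoids the case split and is, if anything, a little cleaner. I see no gap.
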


\begin{proof}
Note that the function $s \mapsto s +\sqrt{s^2+2m}$ is an increasing bijection which maps $\R$ onto $(0,\infty)$. Hence we will apply the substitution 
\begin{equation} \label{eq-sub}
r = \frac{s +\sqrt{s^2+2m}}{2}\, ,
\end{equation}
and split the integration in \eqref{eq-aux1} in two parts as follows;
\begin{align*}
\int_0^{\sqrt{\frac m2}} r^a\, e^{-t\left(r-\frac{m}{2r}\right)^2}\, dr & = 2^{-1-a} \int_{-\infty}^0 \left(s +\sqrt{s^2+2m}\, \right)^a \left(1+\frac{s}{\sqrt{s^2+2m}}\right)\, e^{-ts^2}\, ds \\
& \leq 2^{-1-a} \int_{-\infty}^0 \left(s +\sqrt{s^2+2m}\, \right)^a \, e^{-ts^2}\, ds \\
& = 2^{-1-a} \int_{-\infty}^0 \left(\frac{2m}{\sqrt{s^2+2m} -s} \right)^a \, e^{-ts^2}\, ds \\
& \leq2^{-1-a}\, (2m)^{\frac a2}  \int_{-\infty}^0 e^{-ts^2}\, ds =  2^{-1-a} (2m)^{\frac a2}  \, \frac{\sqrt{\pi}}{2}\ t^{-\frac 12}\, ,
\end{align*}
and
\begin{align*}
\int_{\sqrt{\frac m2}}^\infty r^a\, e^{-t\left(r-\frac{m}{2r}\right)^2}\, dr & = 2^{-1-a} \int_0^{\infty} \left(s +\sqrt{s^2+2m}\, \right)^a \left(1+\frac{s}{\sqrt{s^2+2m}}\right)\, e^{-ts^2}\, ds \\
& \leq 2^{-a} \int_0^{\infty} \left(s +\sqrt{s^2+2m}\, \right)^a \, e^{-ts^2}\, ds \\
&\, \lesssim \int_0^{\infty}  \left(2s +\sqrt{2m}\, \right)^a\, e^{-ts^2}\, ds\,  \lesssim\,  \int_0^{\infty}  \left(s^a +m^{\frac a2}\right)\, e^{-ts^2}\, ds \\
&\, \lesssim    \ t^{-\frac{1+a}{2}} +  m^{\frac a2}\, t^{-\frac 12}\, .
\end{align*}
\end{proof}

\begin{proof}[\bf Proof  of Theorem \ref{thm-mpos}]
By \eqref{eq-jacob} 
\begin{align*}
 e^{-t\, \hh(A,m)}(x,y) &= e^{mt}\,  \frac{t}{\sqrt{4\pi}}  \int_0^\infty s^{-\frac 32}\, e^{-\frac{t^2}{4s}-m^2 s}\, e^{-s H }(x,y)\, ds \\
 & =  \frac{t}{\sqrt{4\pi}}  \int_0^\infty s^{-\frac 32}\ e^{-\left(\frac{t}{2\sqrt{s}}-m\sqrt{s} \right)^2}\, e^{-s H }(x,y)\, ds \, .
\end{align*}
Hence using \eqref{diamag} and the substitution 
\begin{equation} \label{subs} 
r= \frac{\sqrt{t}}{2\sqrt{s}} \, ,
\end{equation} 
we obtain from Lemma \ref{lem-aux1} 
\begin{align} \label{diamag-3} 
\big|\,  e^{-t\, \hh(A,m)}(x,y) \big | & \lesssim\,   t  \int_0^\infty s^{-\frac 52}\ e^{-\left(\frac{t}{2\sqrt{s}}-m\sqrt{s} \right)^2}\,  ds\  \lesssim\ t^{-\frac12}
 \int_0^\infty r^2\, e^{-t\, \left(r-\frac{m}{2r}\right)^2}\, dr \nonumber \\
 & \lesssim \ t^{-1} +t^{-2}.
\end{align}
This proves \eqref{uppb-tsmall}. Similarly, if $\kappa>0$ and $t\geq 1$, then using  \eqref{eq-kov-1}, \eqref{subs} and Lemma \ref{lem-aux1} we get  
\begin{align*}
\big|\,  e^{-t\, \hh(A,m)}(x,y) \big |&\, \lesssim\ (1+|x|)^{\kappa} (1+|y|)^{\kappa}\, \frac{t}{\sqrt{4\pi}}  \int_0^\infty s^{-1-\kappa}\ e^{-\left(\frac{t}{2\sqrt{s}}-m\sqrt{s} \right)^2} \, \frac{ds}{s^{3/2}} \,  \\
 & = \, \frac{2^{3+2\kappa}}{\sqrt{\pi}} \, (1+|x|)^{\kappa}\, (1+|y|)^{\kappa}\ t^{-\frac 12-\kappa} \int_0^\infty r^{2+2\kappa}\, e^{-t\, \left(r-\frac{m}{2r}\right)^2}\, dr \\
 & \lesssim \ (1+|x|)^{\kappa}\, (1+|y|)^{\kappa}\ t^{-1-\kappa}\, .
\end{align*}
The upper bound \eqref{upperb-non-int} then follows from Lemma \ref{lem-aux1} and \eqref{diamag-3}. In the same way we  deduce from \eqref{eq-kov-2} that if $t\geq 1$ and $\kappa=0$, then 
\begin{align*}
\big|\,  e^{-t\, \hh(A,m)}(x,y) \big |&\, \lesssim\   \log (2+|x|) \log (2+|y|) \ t^{-\frac 12} \int_0^\infty r^2\, \Big[ \log\big(2+\frac{t}{4r^2}\big)\Big]^{-2} e^{-t\, \left(r-\frac{m}{2r}\right)^2}\, dr \\
 & \lesssim \ \log (2+|x|) \log (2+|y|) \ t^{-\frac 12} \Big(\displaystyle\int_0^{t^{\frac 14}}  r^2 \Big[ \log\big(2+\frac{\sqrt{t}}{4}\big)\Big]^{-2} e^{-t\, \left(r-\frac{m}{2r}\right)^2}\, dr + \\
 & \qquad +\int_{t^{\frac 14}}^\infty  r^2\, e^{-t\, \left(r-\frac{m}{2r}\right)^2}\, dr\Big)  \\
 & \lesssim \ \log (2+|x|) \log (2+|y|) \ t^{-1} \big[ \log(2+t)\big ]^{-2} \, .
\end{align*}
To complete the proof it suffices to use \eqref{diamag-3} once more. 
\end{proof}

\begin{remark}
Note that in the massive case, contrary to the case $m=0$, the semigroup $e^{-t\, \hh(A,m)}(x,y)$ exhibits different behaviour for long respectively short times. Indeed, by \eqref{diamag-3} we have 
\begin{equation}
\| e^{-t\, \hh(A,m)} \|_{L^1 \to L^\infty}  = \mathcal{O}(t^{-2}) \quad  t \to 0, \qquad \| e^{-t\, \hh(A,m)} \|_{L^1 \to L^\infty}  = \mathcal{O}(t^{-1}) \quad t \to \infty .
\end{equation} 
This is caused by the different behaviour of the symbol $\sqrt{P^2 +m^2}\ -m$ for $P\to 0$ and for $P\to \infty$ respectively. Analogous effect occurs when $A=0$, see \cite[Sec.~7.11]{LL}. 
\end{remark}

\section{\bf Aharonov-Bohm type magnetic fields} 
\label{sec-ab}


\noindent In this section we consider the Aharonov-Bohm magnetic field in $\R^2$. The latter is generated by 
 the vector potential $A$ whose radial
and azimuthal components (in the polar coordinates) are given by
\begin{equation} \label{A-polar}
A(r,\theta) = (a_1(r,\theta),\, a_2(r,\theta)), \qquad a_1=0,\quad
a_2(r)= \left(0,\frac{\alpha}{r}\right) .
\end{equation}
where $\alpha$ is the magnetic. Note that $A\not\in L^2_{\rm loc}(\R^2)$. We will therefore proceed in a different way than in the previous section and define the semigroup $e^{-t \sqrt{H_\alpha}}$  with the help of the partial wave decomposition. We limit ourselves to the analysis of the masless case, $m=0$. The main result of section is stated  in Theorem \ref{thm-semigroup}.

\smallskip

\noindent We define the Hamiltonian $H_\alpha$ as the Friedrichs extension of $(-i\nabla + A)^2$  on
$C_0^\infty(\R^2\setminus\{0\})$. In other words, $H_\alpha$ is the self-adjoint operator in $L^2(\R^2)$ generated 
by the closure, in $L^2(\R^2)$, of the quadratic form
\begin{equation}
Q_\alpha[u] =  \int_0^{2\pi}\! \int_0^\infty \big(|\pd_r u|^2+r^{-2}\, |(-i\pd_\theta+\alpha)\, u|^2\big )\, r\, dr d\theta,
\quad u\in C_0^\infty((0,\infty)\times[0,2\pi)).
\end{equation}


\medskip

\subsection{\bf Partial wave decompostion} 
\noindent 
Given a function $f:\R^2\to \R$ we will often use the polar coordinate representation
$$
f(x,y) = f(r,r',\theta,\theta') \quad \Leftrightarrow \quad x=r(\cos\theta,
\sin\theta), \, \,  y= r'(\cos\theta', \sin\theta').
$$
By expanding a given function $u\in L^2(\R_+\times (0,2\pi))$ into a
Fourier series with respect to the basis $\{e^{i m\theta}\}_{m\in\Z}$ of $L^2((0,2\pi))$, we obtain
a direct sum decomposition
\begin{equation}
L^2(\R^2) = \sum_{m\in\Z} \oplus \, \mathcal{L}_m,
\end{equation}
where 
$$
\mathcal{L}_m= \left\{g\in L^2(\R^2)\, : \,  g(x)= f(r)\,
e^{i m\theta} \, a.e., \, \int_0^\infty |f(r)|^2\,  r\, dr<\infty\right\}. 
$$
Since the vector potential $A$ is radial, the
operator $H_\alpha$ can be decomposed accordingly to the direct sum
\begin{equation} \label{sum-gen}
H_\alpha =   \sum_{m\in\Z}  \oplus \left( h_m \otimes\mbox{id}\right)
\Pi_m,
\end{equation}
where $h_m$ are operators generated by the closures, in $L^2(\R_+, r
dr)$, of the quadratic forms
\begin{equation} \label{qm}
q_m [f] = \int_0^\infty\, \left(|f'|^2+\frac{(\alpha+m)^2}{r^2}\,
|f|^2\right)\, r\, dr
\end{equation}
defined initially on $C_0^\infty(0,\infty)$, and $\Pi_m:L^2(\R^2)\to
\mathcal{L}_m$ is the projector acting as
$$
(\Pi_m u)(r,\theta) = \frac{1}{2\pi}\, \int_0^{2\pi}\,
e^{im(\theta-\theta')}\, u(r,\theta')\, d\theta'.
$$
Consider now the operator
\begin{equation} \label{lbeta}
L_m  = U\, h_m\, U^{-1} \qquad \text{in \, \, \, } L^2(\R_+,
dr),
\end{equation}
where $U: L^2(\R_+, r\, dr)\to L^2(\R_+, dr)$ is the unitary mapping
acting as $(U f)(r) = r^{1/2} f(r)$. Note that $L_m $ is subject
to Dirichlet boundary condition at zero and that it coincides with
the Friedrichs extension of the differential operator
$$
-\frac{d^2}{dr^2}\, +\,  \frac{(m+\alpha)^2-\frac 14}{r^2}
$$
defined on $C_0^\infty(\R_+)$. From \cite[Sect. 5]{k11} we know that
\begin{equation} \label{diag}
W_m \, L_m \, W_m ^{-1}\, \varphi(p) = p\,
\varphi(p), \quad \varphi\in W_m (D(L_m )),
\end{equation}
where the mappings $W_m ,\, W_m ^{-1}: L^2(\R_+) \to L^2(\R_+)$ given
by
\begin{align*} 
(W_m \, u)(p) & = \int_0^\infty u(r)\sqrt{r}\,
J_{|m+\alpha|}(r\sqrt{p})\, dr, \\
(W_m ^{-1} \varphi)(r) & =
\frac 12\, \int_0^\infty \varphi(p)\sqrt{r}\,
J_{|m+\alpha|}(r\sqrt{p})\, dp
\end{align*}
extend to unitary operators
on $L^2(\R_+)$.

\subsection{The semigroup $e^{-t\, \sqrt{H_\alpha}}$}
By the spectral theorem 
\begin{equation} \label{hk-gen}
e^{-t\, \sqrt{H_\alpha}}= \sum_{m\in\Z}\, \oplus \left( e^{-t \sqrt{h_m}} \otimes\mbox{id}\right)
\Pi_m .
\end{equation}
Denote by $p_m(r,r',t)$ the integral kernel of $e^{-t \sqrt{h_m}}$ in $L^2(\R_+, r\, dr)$. From \eqref{lbeta} 
it follows that 
\begin{equation}
p_m(r,r',t) = \frac{1}{\sqrt{r r'}}\ e^{- t\,  \sqrt{L_m}}(r,r').
\end{equation}
On the other hand, in view of \eqref{diag} we get
\begin{align}
 \big( e^{- t\,  \sqrt{L_m} }\, g\big)(r) & =   \big( W_m ^{-1}\, e^{-t\sqrt{p}}\ W_m \, g\big)(r) \nonumber \\
& =\frac 12\, \int_0^\infty \sqrt{r r'}
\int_0^\infty e^{- t \sqrt{p}}\,   J_{|m+\alpha|}(r\sqrt{p})
J_{|m+\alpha|}(r'\sqrt{p})\,  dp\,  g(r')
\, dr' \, . \  \label{epsilon}
\end{align}
Hence
\begin{align} 
p_m(r,r',t) & =  \frac 12\, \int_0^\infty e^{- t \sqrt{p}}\,   J_{|m+\alpha|}(r\sqrt{p})\, 
J_{|m+\alpha|}(r'\sqrt{p})\,  dp \nonumber \\
& = \int_0^\infty e^{- t\, p}\,   J_{|m+\alpha|}(r\, p)\, 
J_{|m+\alpha|}(r'\, p)\, p\,  dp. \label{pm}
\end{align}
In order to simplify the notation we will use in the sequel the shorthand
\begin{equation} \label{z}
z := \frac{r^2}{t^2}\ ,
\end{equation}
From \cite[Eq.~4.14.(16)]{erd} we then get the explicit expression for $p_m$ on the diagonal;  
\begin{align}  
p_m(r,r,t) & = \int_0^\infty e^{- t\, p}\,   J^2_{|m+\alpha|}(r\, p)\,  p\,  dp \label{pm-diag} \\
& = \frac{4^{\nu}}{\pi}\,  (2 \nu+1)\ t^{-2}\, \left(\frac rt \right)^{2\nu}\, \frac{\Gamma^2(\nu+\frac 12)}{\Gamma(2\nu+1)}\ F\Big(\nu+\frac 12, \nu+\frac 32, 2\nu+1;\, -4z \Big),
\nonumber
\end{align}
where 
\begin{equation} \label{nu}
\nu =|m+\alpha|,
\end{equation}
and $F(a,b,c; w)$ denotes the Gauss hypergeometric series, see e.g.~\cite[Eq.~15.1.1]{as}. Using its integral representation 
\begin{equation} \label{F-int}
F(a,b,c\, ; w) = \frac{\Gamma(c)}{\Gamma(b)\, \Gamma(c-b)}\, \int_0^1 s^{b-1}\, (1-s)^{c-b-1}\,(1-s w)^{-a}\, ds, \quad {\rm Re}\, c > {\rm Re}\, b >0, 
\end{equation}
see \cite[Eq.~15.3.1]{as}, in combination with the transformation formula \cite[Eq.~15.3.5]{as}: 
\begin{equation}
F(a,b,c\, ; w) = (1-w)^{-b}\, F\Big(b,c-a,c\, ; \frac{w}{w-1} \Big  )
\end{equation}
we find that  
\begin{align} \label{pm-2} 
p_m(r,r,t) & =  \frac{2 \nu+1}{\pi t^2}\, (4z)^\nu\,  \frac{\Gamma^2(\nu+\frac 12)}{\Gamma(2\nu+1)} (1+4z)^{-\nu-\frac 32}\,  
F\Big(\nu+\frac 32, \nu+\frac 12, 2\nu+1;\, \frac{4z}{4z+1} \Big) \nonumber \\
&= \frac{2 \nu+1}{\pi t^2}\, (4z)^\nu\,  (1+4z)^{-\nu-\frac 32}\,   \int_0^1 s^{\nu-\frac 12}\, (1-s)^{\nu-\frac 12}\, \left(1-\frac{4z s}{4z+1}\right)^{-\nu-\frac 32}\, ds. \nonumber \\
&= \frac{2 \nu+1}{\pi t^2}\, (4z)^\nu\,   \int_0^1 s^{\nu-\frac 12}\, (1-s)^{\nu-\frac 12}\, \left(1+4z (1-s) \right)^{-\nu-\frac 32}\, ds \nonumber \\
&= \frac{2 \nu+1}{\pi t^2}\, (4z)^\nu\,   \int_0^1 s^{\nu-\frac 12}\, (1-s)^{\nu-\frac 12}\, \left(1+4z s \right)^{-\nu-\frac 32}\, ds.
\end{align}
We have 

\begin{lemma} \label{lem-pm}
There exists $\eps_0>0$ such that for every $\eps\in (0,\eps_0)$ there is $C_\eps>0$ for which the upper bound 
\begin{equation} \label{pm-upperb}
(1+r)^{-\frac 32-\eps}\, (1+r')^{-\frac 32-\eps}\ p_m(r,r', t) \ \leq \ C_\eps \ \frac{t^{-2-2\kappa}}{(|m+\alpha|+1)^{1+\eps}}
\end{equation}
holds for all $m\in\Z$ and all $t\geq 1$.
\end{lemma}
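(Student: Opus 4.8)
The plan is to reduce the off-diagonal bound to a diagonal one via positivity and Cauchy--Schwarz, and then to estimate the resulting one-dimensional integral by a regime analysis in the variable $z=r^2/t^2$ from \eqref{z}.

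First I would exploit positivity of the kernels $p_m$. By the same subordination formula as in \eqref{eq-jacob}, applied to $\sqrt{h_m}$, the operator $e^{-t\sqrt{h_m}}$ is an average of the heat operators $e^{-s h_m}$, whose kernel in $L^2(\R_+,r\,dr)$ is of modified-Bessel type and hence nonnegative; therefore $p_m(\cdot,\cdot,t)\ge 0$ for every $t>0$. Since $e^{-t\sqrt{h_m}}$ is a self-adjoint semigroup, the reproducing identity $p_m(r,r,t)=\int_0^\infty p_m(r,\rho,t/2)^2\,\rho\,d\rho$ holds, and Cauchy--Schwarz applied to the composition law gives $p_m(r,r',t)\le \sqrt{p_m(r,r,t)}\,\sqrt{p_m(r',r',t)}$. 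Distributing the spatial weights symmetrically, it then suffices to prove the diagonal estimate
\[
p_m(r,r,t)\ \lesssim_\eps\ (1+r)^{3+2\eps}\,\frac{t^{-2-2\kappa}}{(\nu+1)^{1+\eps}},\qquad \nu=|m+\alpha|\ge\kappa,\quad t\ge 1,
\]
from which \eqref{pm-upperb} follows. Here I use that $\nu\ge\kappa$ by \eqref{kappa}--\eqref{nu}.

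For the diagonal I would start from the explicit formula \eqref{pm-2} and rewrite it as $p_m(r,r,t)=\frac{2\nu+1}{\pi t^2}\,J(\nu,z)$, where, after factoring the integrand,
\[
J(\nu,z)=\int_0^1 G(s)^{\nu}\,\psi(s)\,ds,\qquad G(s)=\frac{4z\,s(1-s)}{1+4zs},\qquad \psi(s)=\frac{1}{\sqrt{s(1-s)}\,(1+4zs)^{3/2}}.
\]
A direct computation shows that $G$ is maximized at $s_*=\frac{\sqrt{1+4z}-1}{4z}$ with the clean value $\max_s G(s)=1-2s_*$ (and $1+4zs_*=(1-s_*)/s_*$). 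This yields two master estimates I will play against each other: since $\int_0^1\psi\le\pi$, one has $J(\nu,z)\le \pi\,(1-2s_*)^{\nu}$; and dropping $(1-s)^{\nu-1/2}\le1$ and substituting $u=4zs$ gives, via the Beta integral $\int_0^\infty u^{\nu-1/2}(1+u)^{-\nu-3/2}\,du=(\nu+\tfrac12)^{-1}$, the bound $J(\nu,z)\le (4z)^{-1/2}(\nu+\tfrac12)^{-1}$. The first is sharp when the peak $s_*$ is not too small, the second when $z$ is very large.

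The estimate then closes by splitting into three regimes. For $z\le1$ the maximizer satisfies $s_*\ge s_*|_{z=1}>0$, so $1-2s_*\le c<1$; writing $(1-2s_*)^{\nu}=(1-2s_*)^{\kappa}(1-2s_*)^{\nu-\kappa}$ and using $1-2s_*\le z$ on the first factor together with exponential decay of $c^{\nu-\kappa}$ on the second gives $J(\nu,z)\lesssim z^{\kappa}(\nu+1)^{-2-\eps}$, whence $p_m\lesssim t^{-2}z^{\kappa}(\nu+1)^{-1-\eps}=r^{2\kappa}t^{-2-2\kappa}(\nu+1)^{-1-\eps}\le (1+r)^{3+2\eps}t^{-2-2\kappa}(\nu+1)^{-1-\eps}$; this is exactly where the exponent $t^{-2-2\kappa}$ is produced. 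For $z>1$ one has $r=t\sqrt z>1$, so $(1+r)^{3+2\eps}t^{-2\kappa}\ge z^{3/2+\eps}$ for $t\ge1$, and the target reduces to $J(\nu,z)\lesssim z^{3/2+\eps}(\nu+1)^{-2-\eps}$. If $1<z\le\nu^2$ I use $1-2s_*\le e^{-2s_*}\le e^{-c/\sqrt z}$, so $J(\nu,z)\le \pi e^{-c\nu/\sqrt z}$ with $\nu/\sqrt z\ge1$, and the exponential beats every power of $z$ and $\nu$; if $z>\nu^2$ the second master estimate gives $J(\nu,z)\lesssim z^{-1/2}(\nu+1)^{-1}$, and $z>\nu^2$ forces $(\nu+1)^{1+\eps}\lesssim z^{2+\eps}$, again yielding $J(\nu,z)\lesssim z^{3/2+\eps}(\nu+1)^{-2-\eps}$.

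The main obstacle is the uniform control of $J(\nu,z)$ simultaneously in $z$ and in $m$, and in particular the matching of the two master estimates across the transition $z\sim\nu^2$ while recovering the precise power $t^{-2-2\kappa}$: that power is invisible in either estimate on its own and emerges in the regime $z\le1$ from the single factor $(1-2s_*)^{\kappa}\le z^{\kappa}$, reflecting that the flux enters the long-time decay only through the smallest value $\kappa$ of $\nu$. The admissible range of $\eps$ is dictated merely by keeping the comparison constants finite, so any sufficiently small $\eps_0>0$ is allowed.
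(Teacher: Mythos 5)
Your argument is correct and reaches the stated bound, but the core of it — the diagonal estimate — goes by a genuinely different route than the paper. Both proofs share the final reduction $p_m(r,r',t)\le\sqrt{p_m(r,r,t)\,p_m(r',r',t)}$ via the semigroup property, and both start from the representation \eqref{pm-2}. From there the paper argues more algebraically: it splits at $\nu=3/2$ (which is why $\eps_0$ in \eqref{eps0} has its specific form), and for $\nu>3/2$ it uses the spatial weight together with $z\le r^2$ (valid for $t\ge1$) to trade $(4z)^{\nu-\rho}$ against $(1+4zs)^{\rho-\nu}\le(4zs)^{\rho-\nu}$, reducing the integral to an exact Beta function $B(\rho+\tfrac12,\nu+\tfrac12)$ whose Stirling asymptotics deliver the $(\nu+1)^{-1-\eps}$ decay in one stroke; the $t^{-2-2\kappa}$ comes for free from $t^{-3}\le t^{-2-2\kappa}$ there and from $t^{-2-2\nu}\le t^{-2-2\kappa}$ in the complementary range $\nu\le3/2$. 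You instead perform a Laplace-type analysis of $G(s)^\nu$ with the explicit maximizer $s_*$ and play two master estimates against each other across the regimes $z\le1$, $1<z\le\nu^2$, $z>\nu^2$; your identification of $\max_sG=1-2s_*\le z$ for $z\le1$ is a nice way to see where $t^{-2-2\kappa}$ really comes from, and your method needs no special choice of $\eps_0$. The price is a more delicate case analysis: in the regime $1<z\le\nu^2$ the bare statement ``the exponential beats every power'' is not quite enough (at $z\sim\nu^2$ the exponential is only a constant), and one must genuinely spend the factor $z^{3/2+\eps}$, e.g.\ by substituting $u=\nu/\sqrt z$ and using $e^{-u/2}\lesssim u^{-3-2\eps}$; also your second master estimate drops $(1-s)^{\nu-1/2}\le1$, which fails for $\nu<1/2$ (i.e.\ for the single mode with $\nu=\kappa<1/2$), though the bound $J\lesssim z^{-1/2}(\nu+\tfrac12)^{-1}$ survives after splitting the integral at $s=1/2$. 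These are repairable details rather than gaps in the strategy.
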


\begin{proof}
Put 
\begin{equation} \label{eps0}
\eps_0 := \min \Big\{ |m+\alpha| -\frac 32\, : \, m\in \Z \ \wedge \ |m+\alpha| > \frac 32\Big\} .
\end{equation}
Clearly we have $\eps_0>0$. Let $\eps <\eps_0$ and put $\rho= \frac 32 +\eps$. 
Keeping in mind the notation \eqref{nu} we will distinguish two cases depending on the value of $\nu$. 

\smallskip

\noindent Assume first that $\underline{\nu >3/2}$. 
In view of \eqref{z} and \eqref{pm-2} 
\begin{align*}
(1+r^2)^{-\rho}\, p_m(r,r,t) & =  \frac{4 \nu+2}{\pi t^2}\ \frac rt\,  (1+r^2)^{-\rho}\,  (4z)^{\nu-\frac 12}\,  \int_0^1 s^{\nu-\frac 12}\, (1-s)^{\nu-\frac 12}\, (1+4z s)^{-\nu-\frac 32}\, ds \\
& \lesssim \, t^{-3} \, \nu \, (4z)^{\nu-\rho}\, \int_0^1 s^{\nu-\frac 12}\, (1-s)^{\nu-\frac 12}\, (1+4z s)^{-\nu-\frac 32}\, ds \\
&  =\, t^{-3} \, \nu \, (4z)^{\nu-\rho}\, \int_0^1 s^{\nu-\frac 12}\, (1+4z s)^{\rho-\nu}\, (1-s)^{\nu-\frac 12}\, (1+4z s)^{-\rho-\frac 32}\, ds,
\end{align*}
where we have used the fact that $z\leq r^2$ by assumption. From \eqref{eps0} it follows that $\rho< \nu$. Hence 
\begin{align*}
(1+r^2)^{-\rho}\, p_m(r,r,t) 
& \lesssim \, t^{-3} \, \nu \, (4z)^{\nu-\rho}\, \int_0^1 s^{\nu-\frac 12}\, (4zs)^{\rho-\nu}\, (1-s)^{\nu-\frac 12}\,  ds \\
&= \, t^{-3} \, \nu   \int_0^1 s^{\rho-\frac 12}\,  (1-s)^{\nu-\frac 12}\, ds = t^{-3} \, \nu\, B\Big(\rho+\frac 12, \nu+\frac 12\Big) \\
& = \,  t^{-3} \, \nu\, \frac{\Gamma\left(\rho+\frac 12\right)\, \Gamma\left(\nu+\frac 12\right)}{\Gamma(\nu+\rho+1)}\, ,
\end{align*}
where $B(\cdot\, , \cdot)$ denotes the Euler beta function. Moreover, by the Stirling formula, see e.g.~\cite[Eq.~6.1.37]{as}, we have 
$$
\frac{\nu\, \Gamma\left(\nu+\frac 12\right)}{\Gamma(\nu+\rho+1)} \ \sim \ \nu^{-\rho+\frac 12} = \nu^{-1-\eps} \qquad \nu\to\infty.
$$
Therefore there exists a constant $C_1$ such that 
\begin{align} \label{pm-ub-1}
(1+r^2)^{-\rho}\, p_m(r,r,t)\,  & \leq  \ C_1\ t^{-3}\ \nu^{-1-\eps} \qquad \forall\ \nu >\frac 32.
\end{align}
Now let $\underline{0 \leq \nu \leq \frac 32}$. In this case we have $\nu \leq \rho$ and \eqref{pm-2} thus implies that 
\begin{align*} 
(1+r^2)^{-\rho}\, p_m(r,r,t)\,  & \leq  \frac{32}{\pi t^2}\ (1+r)^{-2\rho}\ \frac{r^{2\nu}}{t^{2\nu}} \, \leq \, \frac{32}{\pi}\ t^{-2-2\kappa},
\end{align*}
since $\nu \geq \kappa$ by definition. This together with \eqref{nu} and \eqref{pm-ub-1} gives
$$
(1+r^2)^{-\rho}\, p_m(r,r,t)\,  \leq  \ C_2\ t^{-2-2\kappa}\ (|m+\alpha |+1)^{-1-\eps} \qquad \forall\ t\geq 1 
$$
holds for all $\nu$ and some $C_2$. To complete the proof it suffices to use the semigroup property of $e^{-t\, \sqrt{h_m}}$, which implies that 
$$
p_m(r,r',t) \, \leq \, \sqrt{p_m(r,r,t)\, p_m(r',r',t) }\ .
$$
\end{proof}

\noindent As a consequence of Lemma \ref{lem-pm} and equations \eqref{hk-gen}, \eqref{pm-diag} we obtain 

\begin{theorem} \label{thm-semigroup}
Let $\eps_0$ be given by \eqref{eps0}. Then for every $\eps\in(0,\eps_0)$ there exists a constant $K_\eps$ such that
\begin{equation} 
\| (1+|x|)^{-\frac 32-\eps}\, e^{-t \sqrt{H_\alpha} }\, (1+|x|)^{-\frac 32-\eps} \, \|_{L^1(\R^2) \to L^\infty(\R^2)}\,  \leq \, K_\eps\ t^{-2-2\kappa}.
\end{equation}
Moreover, 
\begin{align*}
e^{-t \sqrt{H_\alpha} } (x,x) &= \frac{t^{-2}}{2\pi^2} \sum_{m\in\Z} \! 4^{\nu} (2 \nu+1)\,  \Big(\frac{|x|}{t} \Big)^{2\nu}\, \frac{\Gamma^2(\nu+\frac 12)}{\Gamma(2\nu+1)}\ F\Big(\nu+\frac 12, \nu+\frac 32, 2\nu+1;\, - \frac{4|x|^2}{t^2}\Big),
\end{align*}
where $\nu =|m+\alpha|$.
\end{theorem}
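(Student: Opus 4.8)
The plan is to turn the direct-sum representation \eqref{hk-gen} into a genuine integral kernel on $\R^2$, control that kernel termwise by Lemma \ref{lem-pm}, and then read off the diagonal by setting $y=x$. First I would compute the kernel explicitly. Writing $x=r(\cos\theta,\sin\theta)$, $y=r'(\cos\theta',\sin\theta')$ and using that $\Pi_m$ acts by the angular average against $e^{im(\theta-\theta')}$ while $e^{-t\sqrt{h_m}}$ acts on the radial variable with kernel $p_m(r,r',t)$ in $L^2(\R_+,r\,dr)$, a direct computation gives
$$
e^{-t\sqrt{H_\alpha}}(x,y) = \frac{1}{2\pi}\sum_{m\in\Z} e^{im(\theta-\theta')}\, p_m(r,r',t).
$$
Note that $p_m\geq 0$ (clear from \eqref{pm} together with the subordination formula, the kernels $e^{-sh_m}$ being positivity preserving), so absolute values cause no trouble. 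The interchange of the $m$-sum with the integrations, and the convergence of the series itself, are both furnished by the uniform estimate \eqref{pm-upperb}: it yields $p_m(r,r',t)\le C_\eps\,(1+r)^{3/2+\eps}(1+r')^{3/2+\eps}\,t^{-2-2\kappa}\,(|m+\alpha|+1)^{-1-\eps}$ for $t\ge 1$, and $\sum_{m\in\Z}(|m+\alpha|+1)^{-1-\eps}<\infty$ since $1+\eps>1$.

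For the operator-norm bound I would recall that the $L^1\to L^\infty$ norm of an integral operator equals the essential supremum of the modulus of its kernel. With $\rho=\tfrac32+\eps$, the conjugated operator $(1+|x|)^{-\rho}\,e^{-t\sqrt{H_\alpha}}\,(1+|x|)^{-\rho}$ has kernel $(1+r)^{-\rho}(1+r')^{-\rho}$ times the series above. Applying the triangle inequality (with $|e^{im(\theta-\theta')}|=1$) reduces the bound to $\tfrac{1}{2\pi}\sum_m (1+r)^{-\rho}(1+r')^{-\rho}p_m(r,r',t)$, and then \eqref{pm-upperb} applied term by term gives, uniformly in $x,y$ and in $t\ge 1$,
$$
(1+r)^{-\rho}(1+r')^{-\rho}\,\big|e^{-t\sqrt{H_\alpha}}(x,y)\big| \;\le\; \frac{C_\eps}{2\pi}\,t^{-2-2\kappa}\sum_{m\in\Z}\frac{1}{(|m+\alpha|+1)^{1+\eps}}.
$$
Taking $K_\eps=\frac{C_\eps}{2\pi}\sum_{m\in\Z}(|m+\alpha|+1)^{-1-\eps}$ yields the claimed estimate.

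For the diagonal formula I would set $y=x$, i.e.\ $r'=r$ and $\theta'=\theta$, which collapses the angular factors to $1$ and leaves $e^{-t\sqrt{H_\alpha}}(x,x)=\frac{1}{2\pi}\sum_{m\in\Z} p_m(r,r,t)$. Substituting the closed form \eqref{pm-diag} with $z=|x|^2/t^2$ and $\nu=|m+\alpha|$, and absorbing the constants $\frac{1}{2\pi}\cdot\frac1\pi=\frac{1}{2\pi^2}$, produces exactly the stated hypergeometric series. Since Lemma \ref{lem-pm} already carries the entire analytic burden, the only genuinely delicate point—hence the main obstacle—is justifying the termwise handling of the $m$-series, namely its absolute convergence and the legitimacy of identifying the operator norm with the kernel supremum after exchanging the sum and the integral. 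Both are controlled by the summable majorant $(|m+\alpha|+1)^{-1-\eps}$ supplied by \eqref{pm-upperb}.
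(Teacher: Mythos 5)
Your proposal is correct and follows exactly the route the paper intends: the paper derives Theorem \ref{thm-semigroup} directly from Lemma \ref{lem-pm} together with the decomposition \eqref{hk-gen} and the diagonal formula \eqref{pm-diag}, which is precisely your argument of writing the kernel as $\frac{1}{2\pi}\sum_m e^{im(\theta-\theta')}p_m(r,r',t)$, bounding termwise, and summing the majorant $(|m+\alpha|+1)^{-1-\eps}$. The details you supply (absolute convergence of the $m$-series, identification of the $L^1\to L^\infty$ norm with the kernel supremum, and the bookkeeping of the constant $\frac{1}{2\pi}\cdot\frac{1}{\pi}=\frac{1}{2\pi^2}$ on the diagonal) are exactly what the paper leaves implicit.
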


\medskip

\begin{remark}
The decay rate $t^{-2-2\kappa}$ in Theorem \ref{thm-semigroup} is sharp. This follows from  \eqref{hk-gen} and the asymptotic behavior of $p_m(r,r,t)$ as $t\to\infty$. Indeed, let $k\in\Z$ be such that $\kappa= |k+\alpha|$. Equation \eqref{pm-2} then implies that 
\begin{equation} 
\lim_{t\to\infty} t^{2+2\kappa}\, p_{k}(r,r,t)  = \frac{2\kappa+1}{\pi}\ (2r)^{2\kappa}\, B\Big(\kappa +\frac 12, \kappa+\frac 12\Big), 
\end{equation}

\smallskip

\noindent It should be also noted that if $\alpha\in\Z$, in which case $\kappa=0$, then the operator $H_\alpha$ is unitarily equivalent to the Laplacian in $L^2(\R^2)$ which satisfies 
\begin{equation}
\|\, e^{-t \sqrt{-\Delta} } \, \|_{L^1 \to L^\infty} = \frac{1}{2\pi t^2}\ ,
\end{equation}
for all $t>0$, see \eqref{00}.
\end{remark}

\begin{center}
{\bf Acknowledgments}
\end{center}
The author would like to thank Gabriele Grillo for useful discussions.

\bigskip


\end{document}